\pdfoutput=1


\def\editmode{0}
\def\reportmode{0}

\def\bibfilenames{WISENET}



\if\reportmode1
\documentclass[10pt,final,onecolumn]{IEEEtran}
\else
\documentclass[journal]{IEEEtran}
\fi
\usepackage{dsfont,enumerate}
\if\editmode1  
\usepackage{mathrsfs}
\usepackage{bbm}
\usepackage{amssymb}
\usepackage[english]{babel}
\usepackage[utf8]{inputenc}
\usepackage{algorithm}
\usepackage[noend]{algpseudocode}
\floatname{algorithm}{Procedure}
\usepackage[backend=bibtex,style=alphabetic,sorting=debug]{biblatex}
\DeclareFieldFormat{labelalpha}{\thefield{entrykey}}
\DeclareFieldFormat{extraalpha}{}
\bibliography{\bibfilenames}
\newcommand{\cmt}[1]{\noindent\textcolor{lightgreen}{\underline{[#1]}}} 
\newenvironment{myitemize}{\begin{itemize}}{\end{itemize}}
\newcommand{\myitem}{\item}

\newtheorem{theorem}{Theorem}
\else
\usepackage{mathrsfs}
\usepackage{graphicx}
\usepackage{amssymb}
\usepackage[english]{babel}
\usepackage[utf8]{inputenc}
\usepackage{algorithm}
\usepackage[noend]{algpseudocode}
\floatname{algorithm}{Algorithm}
\usepackage{cite}
\bibliographystyle{IEEEbib}
\newcommand{\cmt}[1]{} 
\newenvironment{myitemize}{}{}
\newcommand{\myitem}{}

\newtheorem{theorem}{Theorem}
\fi 

\usepackage{fixltx2e}

\usepackage{graphicx}

\usepackage{subcaption}              

\usepackage[utf8]{inputenc}

\usepackage{amsfonts}
\usepackage{amsmath}
\usepackage{mathtools}

\usepackage{amssymb}

\usepackage{bm}

\usepackage{color,verbatim}
\usepackage{multirow}
\usepackage{accents}

\usepackage{theoremref}



\usepackage{url}








\newtheorem{myauxproblem}{Problem}

\newcounter{rulecounter}
\newcommand{\resetrule}{ \setcounter{rulecounter}{0}}
\resetrule

\newsavebox{\selvestebox}
\newenvironment{colbox}[1]
  {\newcommand\colboxcolor{#1}%
   \begin{lrbox}{\selvestebox}%
   \begin{minipage}{\dimexpr\columnwidth-2\fboxsep\relax}}
  {\end{minipage}\end{lrbox}%
   \begin{center}
   \colorbox{\colboxcolor}{\usebox{\selvestebox}}
   \end{center}}

\definecolor{orange}{rgb}{1,0.8,0}
\definecolor{gray}{rgb}{.9,0.9,0.9}
\definecolor{darkgray}{rgb}{.3,0.3,0.3}
\definecolor{darkblue}{rgb}{.1,0.0,0.3}
\definecolor{lightblue}{rgb}{0.7,0.7,1}
\definecolor{lightred}{rgb}{1,0.7,.7}
\definecolor{purple}{RGB}{204,153,255}
\definecolor{lightgray}{rgb}{.95,0.95,0.95}
\definecolor{lightgreen}{rgb}{0.6,0.8,0.6}
\definecolor{darkgreen}{rgb}{0.05,0.3,0.05}
\definecolor{pistachio}{RGB}{204,255,153}
\definecolor{paleturquoise}{RGB}{175,238,238}
\definecolor{yellow}{RGB}{255,255,153}





\newcommand{\hbm}[1]{{\hat{\bm #1}}}

 \newcommand{\define}{\triangleq}


\newtheorem{myproposition}{Proposition}
\newtheorem{myquestion}{Question}
\newtheorem{myquiz}{Quiz}
\newtheorem{myremark}{Remark}
\newtheorem{myproblemstatement}{Problem Statement}
\newtheorem{mylemma}{Lemma}
\newtheorem{mytheorem}{Theorem}
\newtheorem{mydefinition}{Definition}
\newtheorem{mycorollary}{Corollary}
\newtheorem{myexample}{Example}

\begin{document}

	\title{Dynamic Regret Analysis for Online Tracking of Time-varying Structural Equation Model Topologies}
	\if\reportmode1
	\author{Bakht Zaman,~\IEEEmembership{Student Member,~IEEE,} Luis Miguel Lopez Ramos,~\IEEEmembership{Member,~IEEE,}  \\ 
		and Baltasar Beferull-Lozano,~\IEEEmembership{Senior Member,~IEEE}
		\thanks{The work in this paper was supported by the SFI Offshore Mechatronics grant 237896/E30, the PETROMAKS Smart-Rig grant 244205 and the IKTPLUSS Indurb grant 270730/O70 from the Research Council of Norway.
		}
		\thanks{The authors  are with the WISENET Lab, Dept. of ICT,
			University of Agder, Jon Lilletunsvei 3, Grimstad, 4879 Norway. E-mails:\{bakht.zaman, luismiguel.lopez, daniel.romero, baltasar.beferull\}@uia.no.}
		\thanks{The material in this work was presented, in part, at CAMSAP 2017 \cite{zaman2017onlinetopology}.}
	}%
	\else
	\author{Bakht Zaman,~\IEEEmembership{Student Member,~IEEE,} Luis Miguel Lopez Ramos,~\IEEEmembership{Member,~IEEE,}\\ 
		and Baltasar Beferull-Lozano,~\IEEEmembership{Senior Member,~IEEE}\\
		WISENET Center,
		Department of ICT, University of Agder, Grimstad, Norway
		\thanks{The work in this paper was supported by the SFI Offshore Mechatronics grant 237896/E30, the PETROMAKS Smart-Rig grant 244205 and the IKTPLUSS Indurb grant 270730/O70 from the Research Council of Norway. }
		\thanks{The authors  are with the WISENET Center, Dept. of ICT,
			University of Agder, Jon Lilletunsvei 3, Grimstad, 4879 Norway. E-mails:\{bakht.zaman, luismiguel.lopez,  baltasar.beferull\}@uia.no.}
	}%
	\fi

	\maketitle

	\begin{abstract}
		 Identifying dependencies among variables in a complex system is an important problem in network science. Structural equation models (SEM) have been used widely in many fields for topology inference, because they are tractable and incorporate exogenous influences in the model. Topology identification based on static SEM is useful in stationary environments; however, in many applications a time-varying underlying topology is sought. This paper presents an online algorithm to track sparse time-varying topologies in dynamic environments and most importantly, performs a detailed analysis on the performance guarantees. The tracking capability is characterized in terms of a bound on the dynamic regret of the proposed algorithm. Numerical tests show that the proposed algorithm can track changes under different models of time-varying topologies. 
	\end{abstract}
	\section{Introduction} 
	\cmt{Network structure}Time series are generated and observed in many applications. Using multiple time series data from a complex system, identifying a structure explaining dependencies (connections) among variables is a well-motivated problem in many fields \cite{kolaczyck2009}. Such a networked structure may offer insights about the system dynamics and can assist in inference tasks such as prediction, event detection, and signal reconstruction \cite{liu2016unsupervised},\cite{giannakis2018topology},\cite{zaman2019online}.
	\par 
	\cmt{SEMs}There are different models and approaches that are extensively used in topology identification in certain applications: see, e.g., \cite{mateos2018connecting}, \cite{zaman2017onlinetopology}, \cite{giannakis2018topology}, and references therein. Among these models, structural equation model (SEM) is a popular model \cite{kline2015}: this is mainly due to its tractability and the ability to identify directed relations by means of the inclusion of exogenous variables, which are naturally available in many applications. These exogenous variables represent influences that do not depend on the (endogenous) variables in the model, and their inclusion contributes to the model identifiability \cite {bazerque2013identifiabilitysem}. Static SEMs have been applied to topology identification problems in various fields, e.g., gene regulatory network discovery from gene expression data \cite {cai2013inferencesem}. However, static SEM cannot capture topology changes if the underlying dynamics are nonstationary and each observation is obtained at instants relatively spaced in time, which occurs in various applications. 
	\par 
	\cmt{dynamicSEMs} In time-varying environments, a dynamic SEM can be applied \cite {asparouhov2018dynamicsem}. A dynamic SEM is considered in \cite {baingana2014trackingcascades} to track information cascades of popular news topics over social networks, which are assumed to have sparse dynamic topologies. In the same work, several online algorithms are presented, but not supported by any performance guarantees, so that their tracking capabilities are not theoretically characterized. In \cite{akhavan2017topologytracking}, an online algorithm for tracking dynamic topologies is proposed where the exogenous input is not fully known, also without convergence guarantees. 
	\par  
	\cmt{This paper}In this paper, an online algorithm to track the changes in dynamic SEM topologies in the lines of \cite {baingana2014trackingcascades} is described and its dynamic regret is analyzed, to theoretically characterize its tracking capabilities. The dynamic regret measures the cumulative difference between the cost function evaluated at the estimates and the cost function evaluated at a sequence of time-varying optimal solutions. Specifically, we provide a bound on the dynamic regret that depends on easily measurable properties of the data, the algorithm hyperparameters, and a metric of how much the model varies along time.
	\par
	\cmt{paper structure}The rest of the paper is organized as follows: Sec. \ref {sec:model} contains the model, problem formulation, and the derivation of the algorithm. Sec. \ref {sec:dynamicregret} establishes the dynamic regret bound, including its formal proofs. Numerical results are presented in Sec. \ref {sec:simulations} and Sec. \ref {sec:conclusion} concludes the paper.
	\section{Model and Problem Formulation} \label {sec:model}
		\begin{myitemize}
			\myitem \cmt{Model}
			\begin{myitemize}
				\myitem \cmt{Dynamic linear SEM}Consider a networked system with $N$ nodes, indexed by $i$. At each time frame indexed by $t$, a number $C$ of interactions (frequently denoted as \emph{contagions}) indexed by $c$ are observed in the system, with $y_{ic}^t$ denoting the intensity of the $c$-th contagion in node $i$ at time $t$. Also, let $x_{ic}$ denote the susceptibility of node $i$ to external influence (infection) by contagion $c$. The dynamic linear structural equation model (SEM) is given by \cite{baingana2014trackingcascades}:
				\begin{equation} \label{eq:sem-onenode}
					y_{ic}^t= \sum_{j=1,j \ne i}^{N} a_{ij}^t y_{jc}^t + b_{ii}^t x_{ic} + e_{ic}^t,
				\end{equation}
				for $i=1,\ldots, N, ~ c=1,\ldots,C, ~ t=1,\ldots,T,$ where
				\begin{myitemize}
					\myitem the coefficients $a_{ij}^t$ are the time-varying SEM parameters that encode the topology of the network,					
					\myitem $b_{ii}^t$ quantifies the level of influence of external sources on node $i$,
					\myitem and $e_{ic}$ denotes the measurement errors and un-modeled dynamics.
				\end{myitemize}
				\myitem \cmt{Vector form:}By defining $\bm y_c^t= [y_{1c}^t,\ldots,y_{Nc}^t]^\top \in \mathbb R^{N}$, $\bm x_c=[x_{1c},\ldots,x_{Nc}]^\top\in \mathbb R^{N} $, $\bm B^t= \mathrm{diag}( \bm b^t)\in \mathbb R^{N\times N}$ with $\bm b^t=[b_{11}^t,\ldots,b_{NN}^t]^\top$, and $\bm e_c^t= [e_{1c}^t,\ldots,e_{Nc}^t]^\top \in \mathbb R^{N} $, the model in \eqref{eq:sem-onenode} can also be written in a compact form as:
				\begin{equation}
					\bm y_c^t = \bm A^t \bm y_c ^t + \bm B^t \bm x_c  + \bm e_c^t, ~~~~ c=1, \ldots, C.
				\end{equation}
				The matrix $\bm A^t \in \mathbb R^{N \times N}$ can be seen as a time-varying adjacency matrix for an SEM-based network.
				\myitem \cmt{Matrix form:}The observations for all contagions can be collected in a matrix by defining $\bm Y^t=[\bm y_1^t, \ldots, \bm y_C^t]\in \mathbb R^{N\times C}$, $\bm X=[\bm x_1, \ldots, \bm x_C] \in \mathbb R^{N\times C}$, and $\bm E^t= [\bm e_1^t, \ldots, \bm e_C^t] \in \mathbb R^{N\times C}$. The dynamic SEM takes the following form:
				\begin{equation} \label{eq:SEMmodelmat}
					\bm Y^t= \bm A^t \bm Y^t + \bm B^t \bm X + \bm E^t.
				\end{equation}
			\end{myitemize}
		\myitem \cmt{Problem Statement:}The problem statement becomes: Given the observations $\{ \bm Y^t \}_{t=1}^T$ and $\bm X$, find $\{\bm A^t\}_{t=1}^T$ and $\{ \bm B^t\}_{t=1}^T$.
		\myitem \cmt{Online Criterion}Along the lines of \cite{baingana2014trackingcascades}, we consider the exponentially-weighted least-squares criterion:
		\begin{equation}
		f_t(\bm A, \bm B) \triangleq \frac{1}{2} \sum_{\tau =1}^t \gamma ^{t-\tau} \left \lVert \bm Y^{\tau}- \bm A \bm Y^\tau -\bm B \bm X  \right \rVert _F^2
		\end{equation}
		and the regularizer $\Omega(\bm A) \define \lambda \left \lVert \mathrm{vec}(\bm A) \right \rVert_1,$ and formulate the estimation problem as 
		\begin{subequations} \label{eq:optimization}
			\begin{align}
				\{\hbm A^t, \hbm B^t\}=\,\underset{\bm A, \bm B}{\arg \min} & \; f_t(\bm A, \bm B) + \Omega(\bm A) \\
			 \mathrm {s.to:} & \;a_{ii}=0, \forall i 
			 \\& \;b_{ij}=0, \forall i \ne j. \label{eq:constraints}
			\end{align}
		\end{subequations} 
		\begin{myitemize}
			\myitem The parameter $\gamma \in (0,1]$ is a forgetting factor that regulates how much past information influences the solution at time $t$,
			\myitem and $\lambda$ is the sparsity-promoting regularization parameter.
			\myitem The constraint $a_{ii}=0$ eliminates any component of the trivial solution $\bm A = \bm I$. The constraint $b_{ij}=0$ guarantees a diagonal $\bm B$, meaning that external sources for a certain node $i$ do not affect any other node $j\ne i$. 
		\end{myitemize}	
		Dealing with constraints can be easily avoided if we rewrite the objective including only the nonzero elements of the matrices. We can rewrite $f_t(\bm A, \bm B)$ as:
	\begin{subequations}
		\begin{align} 
			f_t(\bm A, \bm B)&= \frac{1}{2} \sum_{\tau=1}^{t} \sum_{i=1}^{N} \gamma ^{t-\tau} \left \lVert \bm y_i^{\tau\top}-  \bm a_{-i}^\top \bm Y_{-i}^\tau - b_{ii}\bm x_i^\top \right \rVert_F^2 \label{eq:factorized-f}\\
			&= \frac{1}{2} \sum_{\tau=1}^{t} \sum_{i=1}^{N} \gamma ^{t-\tau} \left \lVert \bm y_i^{\tau\top}- [\bm a_{-i}^\top~b_{ii}]   \begin{bmatrix}
				\bm Y_{-i}^\tau \\
				\bm x_i^\top
			\end{bmatrix} \right \rVert_F^2,
		\end{align}
	\end{subequations}
	where $\bm y_i^{\tau\top}$ is the $i$-th row of $\bm Y^\tau$, $\bm x_i^\top$ is the $i$-th row of $\bm X$, $\bm a_{-i}^\top $ is the $i$-th row of $\bm A$ without $i$-th entry, and $\bm Y_{-i}^\tau$ is obtained by removing the $i$-th row from $\bm Y^\tau$. 
	\par
	Further, we can define $\bm v_i\define [\bm a_{-i}^\top ~ b_{ii}]^\top$ and $\bm Z_i^\tau \define [\bm (\bm Y_{-i}^\tau)^\top ~ \bm x_i]^\top$ to rewrite \eqref {eq:factorized-f}:
		\begin{align}
			f_t( \bm A, \bm B)&=  \frac{1}{2} \sum_{\tau=1}^{t} \sum_{i=1}^{N} \gamma ^{t-\tau} \left \lVert \bm y_i^{\tau\top}- \bm v_{i}^\top \bm Z_i^\tau  \right \rVert_F^2\nonumber \\
			&= \frac{1}{2} \sum_{\tau=1}^{t} \sum_{i=1}^{N} \gamma ^{t-\tau} \left \lVert \bm y_i^{\tau}- (\bm Z_i^\tau)^\top \bm v_{i}  \right \rVert_2^2
		\end{align}
	Note that $f_t$ in \eqref {eq:factorized-f} is separable across $i$ (nodes), 
	so that
	\begin{subequations} 
	\begin{align}
	f_t(\bm A, \bm B) &= \sum_{i=1}^{N} f_t^i(\bm v_{i}), \label{eq:separablef} 
	\\
	 \label{eq:deff_ti}
		\text{where} \quad\quad f_t^i(\bm v_{i}) &\define \frac{1}{2} \sum_{\tau=1}^{t}  \gamma ^{t-\tau} \left \lVert \bm y_i^{\tau}- (\bm Z_i^\tau)^\top \bm v_{i}  \right \rVert_2^2. 
	\end{align}
	\end{subequations}
	Similarly, upon defining
	\begin{equation} \label{eq:deff_omega}
		\Omega^i(\bm v_i) \define \lambda \left \lVert \bm a_{-i} \right \rVert_1,
	\end{equation}
	the regularization function is also separable across the rows of $\bm A$, as $\left \lVert \mathrm{vec}(\bm A) \right \rVert_1= \sum _{i=1}^N\Omega^i(\bm v_i)$.

In the next subsection, the online proximal gradient algorithm in \cite{dixit2019onlineproximal} will be applied to solve \eqref{eq:optimization} leveraging the separability we just presented. Before presenting the algorithm, we re-write $f_t^i(\bm v_i)$ in a form that will simplify the computation of its gradient. By expanding \eqref{eq:deff_ti} and ignoring terms not dependent on $\bm v_{i}$:
		\begin{align}
			f_t^i(\bm v_{i}) \propto \frac{1}{2} \sum_{\tau=1}^{t} \! \gamma ^{t-\tau} \Big[
			& \bm v_i^\top \bm Z_i^\tau (\bm Z_i^\tau)^\top \bm v_i \!-\! 2\bm y_i^{\tau\top}  (\bm Z_i^\tau)^\top \bm v_{i} 
			\Big ],\nonumber 
		\end{align}
	the gradient of $f_t^i(\bm v_{i})$ is given by
	\begin{subequations}
		\begin{align} 
			\nabla_{\bm v_i} f_t^i(\bm  v_i) = &\bm \Phi_{\bm Z_i}^t \bm v_i - \bm r_i^t, \label{eq:gradexpression} \\
			\text{where} \;\;\bm \Phi_{\bm Z_i}^t  \define &\sum_{\tau=1}^{t}  \gamma ^{t-\tau} \bm Z_i^\tau (\bm Z_i^\tau)^\top \label{eq:defPhi}\\
			\text{and} \;\;\bm r_i^t \define &\sum_{\tau=1}^{t}  \gamma ^{t-\tau} \bm Z_i^\tau (\bm y_i^{\tau \top})^\top. \label{eq:defr}
		\end{align}
	\end{subequations}
	Note that the variables defined in the latter two expressions can be computed recursively, as will be expressed in the tabulated algorithm (lines 5 and 6).
		
	\subsection{Proximal online gradient algorithm}
	The update of the proximal online gradient descent algorithm \cite{dixit2019onlineproximal}, applied to the $i$-th portion of the separable problem presented in the previous section, yields
		\begin{equation} \label{eq:proxupdate}
			\bm v_i[t+1]= \textbf {prox}_{\Omega_i}^\alpha \left(\bm g_i^\alpha[t](\bm v_i[t])\right ),
		\end{equation}
		where
		\begin{myitemize}
			\myitem $\alpha>0$,
			\myitem ${\bm g}_i^{\alpha}[t](\bm u) \define \bm u - \alpha \nabla _{\bm u} f_i^t(\bm u)$, and
			\myitem  
			\begin{equation} \label {eq:defproximaloperator}
				\textbf{prox}_{ \Psi}^\alpha (\bm w)\triangleq \underset{\bm s \in \mathrm{dom }\Psi}{\arg\min}\left [\Psi(\bm s)+\frac{1}{2 \alpha }\left \lVert \bm s-\bm w\right \rVert_2^2\right].
			\end{equation}
		\end{myitemize}
	
	From the definition of $\bm v_i$ and \eqref{eq:deff_omega}, it becomes clear that 
	\begin{equation}
	\textbf{prox}_{\Omega_i}^\alpha (\bm s) = \left[ S_{\alpha\lambda}([\bm s]_{1:N-1})^\top [\bm s]_N\right]^\top
	\end{equation}
	with $S_{\alpha\lambda}(\bm w) 
	$ denoting the standard soft-thresholding operator.
	The complete procedure is presented in \textbf{Algorithm \ref {alg:dynamicSEM}}. Observe that the step size $\alpha$ is required to be small enough, specifically $\alpha < 1/L_f $ where $\lambda_{\mathrm {max}}(\bm \Phi_{\bm Z_i}^t) \leq L_f, ~\forall~ i,t$.

	\begin{algorithm}
		\caption{Online algorithm for tracking dynamic SEM-based Topologies}\label{alg:dynamicSEM}
		\textbf{Input:} $\gamma, \lambda, \alpha \in (0,1/L_f], \{\bm Y^t\}_{t=1}^{T}, \bm X$ \\
		\textbf{Output:} $\{\bm A[t]\}_{t=1}^T$,  $\{ \bm B[t]\}_{t=1}^T$ \\
		\textbf{Initialization:} $  \bm v_i[1]=0_{N\times 1}, \bm \Phi_{\bm Z_i}^0\!=\! \bm 0_{N\times N}, ~\bm r_i^0=\bm 0_{N\times1}, i=1,\ldots, N $
		\begin{algorithmic}[1]
			\For {$t=1,2,\ldots,T $} 
			\State {Receive data  $\bm Y^t$}
			\For {$i=1,2,\ldots,N$}
				\State{$\bm Z_i^t = [\bm (\bm Y_{-i}^t)^\top ~ (\bm x_i^\top)^\top]^\top$}
				\State $\bm \Phi_{\bm Z_i}^t= \gamma \, \bm \Phi_{\bm Z_i}^{t-1} + \bm Z_i (\bm Z_i)^\top$ 
				\State $\bm r_i^t= \gamma \, \bm r_i^{t-1}+\bm Z_i^t (\bm y_i^{t \top})^\top$ 
				\State $\nabla_{\bm v_i} f_t^i(\bm  v_i[t])= \bm \Phi_{\bm Z_i}^t \bm v_i[t] - \bm r_i^t$ 
				\State  $\bm  v_i^{\text{f}}[t] = \bm  v_i[t]- \alpha \nabla_{\bm v_i} f_t^i(\bm  v_i[t])$
				\State {$\bm  v_i[t+1]= \textbf{prox}_{ \Omega ^i}^\alpha \left (\bm  v_i^{\text{f}}[t]\right )$ }
				\EndFor
				\State \textbf{end for}
				\State Form $\bm A[t]$ and $\bm B[t]$ from $\bm v_i[t], i=1,...,N$
			\EndFor
			\State \textbf{end for}
		\end{algorithmic}
	\end{algorithm}
\end{myitemize}
\section{Dynamic Regret Analysis}\label{sec:dynamicregret}
	\begin{myitemize} 	
	\myitem \cmt{Dynamic regret derivation}The performance of online algorithms is evaluated by means of the regret, which is the difference in performance between the online algorithm and a solution which can be computed based on the data in hindsight. The regret measure can be static or dynamic. In the case of the static regret, the best comparator minimizes the objective averaged over all past instants, which implicitly assumes a stationary model. Therefore, the static regret cannot express the tracking performance of an online algorithm in dynamic environments, where the generating parameters are time-varying. To characterize the tracking performance of online algorithms, the dynamic regret \cite{hall2015dynamicregret} is used, which results from comparing the online algorithm against an optimal sequence of time-varying hindsight solutions. Specifically, upon defining $h_t(\bm A[t], \bm B[t]) \define f_t(\bm A[t], \bm B[t]) + \Omega(\bm A[t])$, 
 the dynamic regret is given by:
	\begin{equation}
		R_d[T]= \sum_{t=1}^{T} \left[ h_t(\bm A[t], \bm B[t]) - h_t(\bm A^\star[t], \bm B^\star[t]) \right ].
	\end{equation}
	with $(\bm A^\star[t], \bm B^\star[t])$ representing the estimate produced by a clairvoyant that knows $h_t(\cdot)$ in advance (in contrast, the online algorithm does not have access to $h_t(\cdot)$ while producing $(\bm A[t], \bm B[t])$).
	Using \eqref  {eq:factorized-f}, the above expression can be written as:
	\begin{align*}
		R_d[T] &= \sum_{t=1}^{T} \sum_{i=1}^{N} \left[ f_t^i(\bm v_i[t])\!+ \!\Omega^i(\bm v_i[t])\!-\! f_t^i(\bm v_i^\star[t]) \!- \!\Omega^i (\bm v_i^\star[t]) \right ]\\
		&= \sum_{t=1}^{T} \sum_{i=1}^{N} \left[ h_t^i(\bm v_i[t])-h_t^i(\bm v_i^\star[t]) \right ] = \sum_{i=1}^{N} R_d^i[T],
	\end{align*}
	where 
	$h_t^i(\bm v_i[t]) \define f_t^i(\bm v_i[t])+ \Omega^i(\bm v_i[t])$, 
	$\bm v_i^\star [t] \triangleq \arg \min_{\bm v_i} f_t^i(\bm v_i)+\Omega^i (\bm v_i)$, and $R_d^i[T] \define \sum_{t=1}^{T} [ h_t^i(\bm v_i[t])- h_t^i(\bm v_i^\star[t]) ]$. 
	Observe that the regret expression is separable across index $i$ (nodes). Thus, for the sake of simplicity, we derive the regret for the $i$-th node, i.e., $R_d^i[T]$. The total regret will be obtained by adding the individual regret expressions. 
	We define the path length for each subproblem (corresponding to each node $i$) as:
	\begin{equation}
		W_i[T] \define \sum_{t=2}^T \left \lVert \bm v_i^\star [t]-\bm v_i^\star [t-1] \right \rVert_2,
	\end{equation}
	which represents the aggregated variations in the consecutive optimal solutions. 
	\myitem \cmt{Assumptions}In this work, the following assumptions are considered:
	\begin{enumerate}[{A}1.]
		\item{ \textit{Bounded process:} There exists $B_{xy}$ such that $|y_{ic}^t|^2 \leq B_{xy} $ and $|x_{ic}|^2 \leq B_{xy} $, ~ $\forall~ i,c,t$.} \label{as:boundedprocess}
		\item { \textit{Strong convexity:} Each function $f_t^i$ is $\beta$-strongly convex, i.e., $\lambda_{\mathrm {min}} (\bm \Phi_{\bm Z_i}^t) \geq \beta >0, ~\forall~ i,t$.} \label{as:storngconvexity}
		\item { \textit{Lipschitz smoothness:} Each function $f_t^i$ is $L_f$-Lipschitz smooth, i.e., $\lambda_{\mathrm {max}}(\bm \Phi_{\bm Z_i}^t) \leq L_f, ~\forall~ i,t$.} \label{as:lipschitzsmoothness}
		\item { \textit{Bounded variations of the optimal solution:} The distance between two consecutive optimal solution is bounded, i.e.,
			\begin{equation}
			\left \lVert \bm v_i^\star[t] - \bm v_i^\star[t+1]\right \rVert_2 \leq d, d\geq 0,\forall \,t,i.
			\end{equation}	}\label{as:boundedvariations}
	\end{enumerate}
	These above assumptions are standard in the literature. Assumption A\ref{as:boundedprocess} does not entail any loss of generality and is satisfied in most real-world applications.  
	Next, we present an upper bound on the dynamic regret. 
	\myitem \cmt{Theorem}
	\begin{theorem}
		The individual dynamic regret of \textbf{Algorithm \ref{alg:dynamicSEM}} for a node $i$ is given by:
		\begin{equation}
		R_d^i[T]= D_h \left (\left \lVert \bm v_{i}^\star[1] \right \rVert_2 + W_i[T] \right),
		\end{equation}
		where 
		\begin{equation} \label{eq:regretconstant}
			D_h \define \frac{1}{\alpha \beta }  \left ( \frac{B_{xy}C\sqrt{N}}{1-\gamma} \left (1+\frac{L_f}{\beta}\right )+ \lambda \sqrt{N-1}\right ),
		\end{equation}
		under assumptions A1, A2, A3, and A4.
	\end{theorem}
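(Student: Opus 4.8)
The plan is to bound $R_d^i[T]$ through two ingredients: a bound on the cumulative tracking error $\sum_{t=1}^{T}\lVert\bm v_i[t]-\bm v_i^\star[t]\rVert_2$, and a uniform bound $G_h$ on the subgradient norms of $h_t^i$ along the trajectory of Algorithm~\ref{alg:dynamicSEM}; the stated bound then follows with $D_h=G_h/(\alpha\beta)$.

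First I would show that each iteration is a uniform contraction toward the instantaneous minimizer. Writing the per-node proximal-gradient operator $\mathcal{T}_t^i(\bm v)\define\textbf{prox}_{\Omega^i}^\alpha\!\big(\bm v-\alpha\nabla_{\bm v}f_t^i(\bm v)\big)$, lines 7--9 of Algorithm~\ref{alg:dynamicSEM} read $\bm v_i[t+1]=\mathcal{T}_t^i(\bm v_i[t])$. By \eqref{eq:gradexpression} the inner map is affine, $\bm v\mapsto(\bm I-\alpha\bm\Phi_{\bm Z_i}^t)\bm v+\alpha\bm r_i^t$; since A2--A3 place the spectrum of $\bm\Phi_{\bm Z_i}^t$ in $[\beta,L_f]$ and $\alpha\le 1/L_f$, we get $\lVert\bm I-\alpha\bm\Phi_{\bm Z_i}^t\rVert_2\le 1-\alpha\beta\define\rho<1$, uniformly in $i,t$. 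The proximal operator is non-expansive, so $\mathcal{T}_t^i$ is $\rho$-Lipschitz; and since $\bm v_i^\star[t]$ minimizes $h_t^i$, the optimality condition $\bm 0\in\nabla f_t^i(\bm v_i^\star[t])+\partial\Omega^i(\bm v_i^\star[t])$ makes $\bm v_i^\star[t]$ the (unique, by A2) fixed point of $\mathcal{T}_t^i$. Putting $\varepsilon_t\define\lVert\bm v_i[t]-\bm v_i^\star[t]\rVert_2$, the contraction plus the triangle inequality give the drift recursion $\varepsilon_{t+1}\le\rho\,\varepsilon_t+\lVert\bm v_i^\star[t+1]-\bm v_i^\star[t]\rVert_2$; unrolling from $\bm v_i[1]=\bm 0$ (so $\varepsilon_1=\lVert\bm v_i^\star[1]\rVert_2$), summing over $t$ and swapping the order of the resulting double sum yields $\sum_{t=1}^{T}\varepsilon_t\le\frac{1}{1-\rho}\big(\lVert\bm v_i^\star[1]\rVert_2+W_i[T]\big)=\frac{1}{\alpha\beta}\big(\lVert\bm v_i^\star[1]\rVert_2+W_i[T]\big)$.

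Next I would convert tracking error into regret. Convexity of $h_t^i=f_t^i+\Omega^i$ gives $h_t^i(\bm v_i[t])-h_t^i(\bm v_i^\star[t])\le\lVert\bm g_t^i\rVert_2\,\varepsilon_t$ for any $\bm g_t^i\in\partial h_t^i(\bm v_i[t])$, and I would take $\bm g_t^i=\nabla_{\bm v}f_t^i(\bm v_i[t])+\bm s_t$ with $\bm s_t\in\partial\Omega^i(\bm v_i[t])$. The $\ell_1$ term of $\Omega^i$, active on the first $N-1$ coordinates, gives $\lVert\bm s_t\rVert_2\le\lambda\sqrt{N-1}$; A1 applied to \eqref{eq:defr} gives $\lVert\bm r_i^t\rVert_2\le\frac{B_{xy}C\sqrt{N}}{1-\gamma}\define R$ after summing the geometric weights; and A3 with \eqref{eq:gradexpression} gives $\lVert\nabla_{\bm v}f_t^i(\bm v_i[t])\rVert_2\le L_f\lVert\bm v_i[t]\rVert_2+R$. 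It remains to bound the iterates uniformly, which I would do by induction: $\lVert\bm v_i[1]\rVert_2=0\le R/\beta$, and since $\bm 0$ is a fixed point of $\textbf{prox}_{\Omega^i}^\alpha$, non-expansiveness gives $\lVert\bm v_i[t+1]\rVert_2\le\lVert(\bm I-\alpha\bm\Phi_{\bm Z_i}^t)\bm v_i[t]+\alpha\bm r_i^t\rVert_2\le\rho\lVert\bm v_i[t]\rVert_2+\alpha R\le(1-\alpha\beta)(R/\beta)+\alpha R=R/\beta$. Hence $\lVert\bm g_t^i\rVert_2\le R(1+L_f/\beta)+\lambda\sqrt{N-1}=G_h$, and $R_d^i[T]=\sum_{t=1}^{T}\big(h_t^i(\bm v_i[t])-h_t^i(\bm v_i^\star[t])\big)\le G_h\sum_{t=1}^{T}\varepsilon_t\le D_h\big(\lVert\bm v_i^\star[1]\rVert_2+W_i[T]\big)$, which is the claim; A4 enters only to keep $W_i[T]$ finite.

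I expect the main obstacle to be the uniform iterate bound $\lVert\bm v_i[t]\rVert_2\le B_{xy}C\sqrt{N}/(\beta(1-\gamma))$: because $f_t^i$ is quadratic and only locally Lipschitz, $\lVert\nabla f_t^i\rVert_2$ along the trajectory cannot be controlled without first controlling $\lVert\bm v_i[t]\rVert_2$, and recovering the exact constant $G_h$ (hence $D_h$) needs the induction above, where the step-size condition $\alpha\le 1/L_f$, strong convexity A2, and non-expansiveness of the proximal map all have to cooperate. The geometric-sum manipulations and the convexity inequalities are otherwise routine.
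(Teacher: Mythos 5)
Your proposal is correct and follows essentially the same route as the paper: bound the per-step regret by a uniform subgradient bound times the tracking error via convexity and Cauchy--Schwarz, establish the per-iteration contraction $\lVert\bm v_i[t+1]-\bm v_i^\star[t]\rVert_2\le(1-\alpha\beta)\lVert\bm v_i[t]-\bm v_i^\star[t]\rVert_2$, bound the iterates and hence the gradient along the trajectory using A1--A3, and telescope the tracking error against the path length $W_i[T]$. The only difference is cosmetic: where the paper derives the contraction factor from the general co-coercivity inequality for $\beta$-strongly convex, $L_f$-smooth functions, you exploit the quadratic structure of $f_t^i$ and read $\rho=1-\alpha\beta$ directly off the spectrum of the symmetric matrix $\bm I-\alpha\bm\Phi_{\bm Z_i}^t$; both yield the identical constant $D_h$.
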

\begin{proof}
	Since $h_t^i$ is convex, we have by definition that:
	\begin{equation}
		h_t^i(\bm v_i^\star[t]) \geq h_t^i(\bm v_i[t]) + (\tilde \nabla h_t^i(\bm v_i[t]))^\top  (\bm v_i^\star[t]-\bm v_i[t]),
	\end{equation}
	$\forall \,\bm v_i[t], \bm v_i^\star[t]$,  $\tilde \nabla h_t^i(\bm v_i[t])$ denotes a subgradient of $h_t^i(\bm v_i[t])$ given by $\tilde \nabla h_t^i(\bm u)=  \nabla f_t^i(\bm u) + \tilde \nabla \Omega^i(\bm u)$ with $\tilde \nabla \Omega^i(\bm u) \in \partial \Omega^i(\bm u)$. 
	Rearranging and summing the above inequality from $t=1$ to $T$, we have
	\begin{align}\label{eq:regretexp}
		\sum_{t=1}^T \left [ h_t^i(\bm v_i[t]) \;-\right.& \left. h_t^i(\bm v_i^\star )\right ] 
		\leq \sum_{t=1}^T \left(\tilde \nabla h_t^i(\bm v_i[t]) \right )^\top\!(\bm v_i[t] -\bm v_i^\star[t])\nonumber \\
		\leq &\sum_{t=1}^T  \left \lVert\tilde \nabla h_t^i(\bm v_i[t]) \right \rVert_2 \cdot \left \lVert\bm v_i[t] -\bm v_i^\star[t]\right \rVert_2,
	\end{align} 
	where the second inequality follows from the Cauchy-Schwarz inequality.
	Next, we derive a bound on $ \lVert\tilde \nabla h_t^i(\bm v_i[t])  \rVert_2$. Note first that it holds that
		\begin{equation}
			\left \lVert \tilde \nabla (h_t^i(\bm v_i[t])\right \rVert_2
			\leq \left \lVert  \nabla f_t^i(\bm v_i[t]) \right \rVert_2 +\left \lVert \tilde \nabla\Omega^i(\bm v_i[t]) \right \rVert_2. \label{eq:boundonsubgrad}
		\end{equation}
		Thus, we have to prove that $\lVert  \nabla f_t^i(\bm v_i[t])  \rVert_2 $ and $ \lVert \tilde \nabla\Omega^i(\bm v_i[t])  \rVert_2$ are bounded $\forall \, \bm v_i[t]$. First, we prove that $\lVert  \nabla f_t^i(\bm v_i[t])  \rVert_2 $ is bounded. To this end, from \eqref  {eq:gradexpression}, using the triangular inequality, the spectral radius of $\bm \Phi_{\bm Z_i}^t$, and assumption  A\ref{as:lipschitzsmoothness}, we obtain the following:
			\begin{align}
				\left \lVert  \nabla f_t^i(\bm v_i[t+1])  \right\rVert_2 
				& =\left \lVert  \bm \Phi_{\bm Z_i}^t \bm v_i [t+1]- \bm r_i^t  \right\rVert_2\nonumber \\
				&\leq  \left \lVert  \bm \Phi_{\bm Z_i}^t \bm v_i [t+1]\right\rVert_2+\left \lVert  \bm r_i^t  \right\rVert_2\nonumber \\
				&\leq  \lambda_{\mathrm {max}}( \bm \Phi_{\bm Z_i}^t)\left \lVert  \bm v_i [t+1]\right\rVert_2\!+\!\left \lVert  \bm r_i^t  \right\rVert_2 \nonumber \\
				&\leq L_f\left \lVert  \bm v_i [t+1]\right\rVert_2+\left \lVert  \bm r_i^t  \right\rVert_2. \label{eq:gradboundinVandr}
			\end{align}
		We need to derive a bound on $\lVert  \bm v_i [t+1]\rVert_2$ and $\lVert \bm r_i^t \rVert_2$. First, we derive a bound on $\lVert \bm r_i^t \rVert_2$. From the definition of $ \bm r_i^t$ in \eqref  {eq:defr}, and using assumption A\ref{as:boundedprocess}, we obtain the bound as follows:  
			\begin{align}
				\left \lVert  \bm r_i^t  \right\rVert_2 &=  \left \lVert \sum_{\tau=1}^{t}  \gamma ^{t-\tau} \bm Z_i^\tau (\bm y_i^{\tau \top})^\top\right\rVert_2 \leq \left \lVert \sum_{\tau=1}^{t}  \gamma ^{t-\tau} B_{xy} \bm 1_{N\times C}\bm 1_{C}\right\rVert_2\nonumber\\
				& = B_{xy}\left \lVert \sum_{\tau=1}^{t}  \gamma ^{t-\tau}  C\bm 1_{N}\right\rVert_2 = B_{xy}C \sum_{\tau=1}^{t}  \gamma ^{t-\tau}  \left \lVert\bm 1_{N}\right\rVert_2 \nonumber\\
				& \leq \frac{B_{xy}C\sqrt{N}}{1-\gamma} = \frac{B_{xy}C\sqrt{N}}{\mu}, \label{eq:boundonr}
			\end{align}
		where $\mu \define 1- \gamma$. Thus, we have derived a bound on  $\left \lVert  \bm r_i^t  \right\rVert_2$.
		To derive an upperbound on $\lVert  \bm v_i [t+1]\rVert_2$, from the update expression of the algorithm, and using assumption A\ref{as:storngconvexity}, we have that:
			\begin{align}
				\left \lVert \bm v_i[t+1] \right  \rVert_2 & \leq 
			 \left \lVert \bm v_i[t] - \alpha \nabla f_i^t(\bm v_i[t])\right  \rVert_2 \nonumber \\
				& = \left \lVert \bm v_i[t] - \alpha \left (\bm \Phi_{\bm Z_i}^t \bm v_i[t] -\bm r_i^t \right )\right  \rVert_2 \nonumber \\
				& =\left \lVert \left ( \bm I - \alpha \bm \Phi_{\bm Z_i}^t\right)\bm v_i[t] + \alpha \bm r_i^t \right  \rVert_2 \nonumber \\
				& \leq  \lambda_{\mathrm{max}} \left ( \bm I - \alpha \bm \Phi_{\bm Z_i}^t\right) \left \lVert \bm v_i[t] \right \rVert_2+ \alpha \left \lVert \bm r_i^t\right  \rVert_2 \nonumber\\
				& =\left ( 1 - \alpha \lambda_{\mathrm{min}} ( \bm \Phi_{\bm Z_i}^t\right) \left \lVert \bm v_i[t] \right \rVert_2 + \alpha \left \lVert \bm r_i^t \right  \rVert_2 \nonumber \\
				&\leq \left ( 1 - \alpha\beta \right )\left \lVert \bm v_i[t] \right \rVert_2 + \alpha \left \lVert \bm r_i^t \right  \rVert_2. \label{eq:boundonestimates}
			\end{align}
		Substituting the bound on $\bm r_i^t$ from \eqref {eq:boundonr} in the above inequality, we obtain:
			\begin{align*}
				\left \lVert \bm v_i[t+1] \right  \rVert_2 & \leq  ( 1 - \alpha\beta  )\left \lVert \bm v_i[t] \right \rVert_2 + \alpha \frac{B_{xy}C\sqrt{N}}{\mu}\\
				& = \delta \left \lVert \bm v_i[t] \right \rVert_2 +  \frac{\alpha B_{xy}C\sqrt{N}}{\mu}.
			\end{align*}
		By recursive substitution in the above inequality:
		\begin{subequations}
			\begin{align*}
			\left \lVert \bm v_i[t+1] \right  \rVert_2 & \leq \delta \left ( \! \delta \left \lVert \bm v_i[t\!-\!1] \right \rVert_2 \! + \!  \frac{\alpha B_{xy}C\sqrt{N}}{\mu}\right) \!+ \! \frac{\alpha B_{xy}C\sqrt{N}}{\mu}\\
			& = \delta^2 \left \lVert \bm v_i[t-1] \right \rVert_2 + \delta \frac{\alpha B_{xy}C\sqrt{N}}{\mu} + \frac{\alpha B_{xy}C\sqrt{N}}{\mu}\\
			& \leq  \delta^3 \left \lVert \bm v_i[t-2] \right \rVert_2 + \frac{\alpha B_{xy}C\sqrt{N}}{\mu}( \delta^2 \!+\! \delta \! +\! 1) \leq \ldots \\
			& \leq \delta ^{k} \left \lVert \bm v_i[t-k+1] \right \rVert_2 + \frac{\alpha B_{xy}C\sqrt{N}}{\mu} \sum_{i=0}^{k-1} \delta^i,
			\end{align*}
		\end{subequations}
		where $1\leq k\leq t$. For $k=t$, the above inequality becomes
			\begin{align}
				\left \lVert \bm v_i[t+1] \right  \rVert_2 & \leq \delta ^{t} \left \lVert \bm v_i[+1] \right \rVert_2 + \frac{\alpha B_{xy}C\sqrt{N}}{\mu} \sum_{i=0}^{t-1}\delta^i \nonumber \\
				& = \frac{\alpha B_{xy}C\sqrt{N}}{\mu} \frac{1-\delta ^t}{1-\delta}  \leq \frac{\alpha B_{xy}C\sqrt{N}}{\mu} \frac{1}{\alpha \beta}	\nonumber\\
				& = \frac{ B_{xy}C\sqrt{N}}{\mu \beta}. \label{eq:boundonestiamtes}
			\end{align}
		By substituting the bounds from \eqref {eq:boundonestiamtes} and \eqref {eq:boundonr} into  \eqref {eq:gradboundinVandr}, we obtain the bound on $\left \lVert  \nabla f_t^i(\bm v_i[t+1])  \right\rVert_2$ as follows: 
		\begin{subequations}
			\begin{align}
				\left \lVert  \nabla f_t^i(\bm v_i[t+1])  \right\rVert_2 & \leq  \frac{ L_f B_{xy}C\sqrt{N}}{\mu \beta} + \frac{B_{xy}C\sqrt{N}}{\mu}\\
				& = \frac{B_{xy}C\sqrt{N}}{\mu} \left (1+\frac{L_f}{\beta}\right ). \label{eq:gradbound}
			\end{align} 
		\end{subequations}
		To prove that $\lVert \tilde \nabla\Omega^i(\bm v_i[t]) \rVert_2$ is bounded $\forall \,\bm v_i[t]$, first we compute the Lipschitz continuity parameter of $\Omega^i$, i.e., $L_{\Omega}$ and then apply the result in  \cite [Lemma 2.6]{shalev2011online}, which establishes that all the subgradients of a function are bounded by its Lipschitz continuity parameter. To find $L_{\Omega}$, let $\bm a'\define [\bm a^\top m]^\top, \bm b' \define [\bm b^\top n]^\top, \bm a,\bm b \in \mathbb R^{N-1},~m,n \in \mathbb R$. By the triangular inequality and the reverse triangular inequality, we have that:
			\begin{align*}
				\left\lvert \Omega^i(\bm a') -\Omega^i(\bm b') \right\rvert&= \left\lvert \lambda \left \lVert\bm a \right \rVert_1 -\lambda \left \lVert \bm b \right \rVert_1\right\rvert\\
				&= \left\lvert \lambda \sum_{i=1}^{N-1}\left [|a_i|-|b_i|\right ]\right\rvert\\
				&\leq \lambda \sum_{i=1}^{N-1} \big\lvert|a_i|-|b_i|\big\rvert
				\leq \lambda \sum_{i=1}^{N-1} \lvert a_i-b_i \rvert \\
				&=\lambda \left \lVert \bm a - \bm b \right \rVert_1\leq \lambda \sqrt{N-1}\left \lVert \bm a - \bm b \right \rVert_2\\
				&\leq \lambda \sqrt{N-1}\left \lVert \bm a' - \bm b' \right \rVert_2.
			\end{align*}
	Thus, we have that $L_{\Omega}=\lambda \sqrt{N-1}$. Substituting these bounds in \eqref {eq:boundonsubgrad}, we have
	\begin{align}
		\left \lVert \tilde \nabla (f_t^i(\bm v_i)\right \rVert_2 \! \!\!+\! \left \lVert\tilde \nabla \Omega^i(\bm v_i)) \right \rVert_2\! \leq \! \frac{B_{xy}C\sqrt{N}}{\mu} \left (\!1\!  +\! \frac{L_f}{\beta}\right ) \!+ \!\lambda \sqrt{N\!-\!1}.
	\end{align}
	Substituting the above bound in \eqref {eq:regretexp}, we obtain
	\begin{align}
		& \sum_{t=1}^T \left [ h_t^i(\bm v_i[t])- h_t^i(\bm v_i^\star )\right ] \nonumber \\&\leq \sum_{t=1}^T  \Bigg (\frac{B_{xy}C\sqrt{N}}{\mu} \left (1+\frac{L_f}{\beta} \right ) 
		+ \lambda \sqrt{N-1}\Bigg ) \left \lVert\bm v_i[t] -\bm v_i^\star[t]\right \rVert_2\nonumber\\
		&=   \left (\frac{B_{xy}C\sqrt{N}}{\mu} \left (1\!+\!\frac{L_f}{\beta}\right )\!+\! \lambda \sqrt{N\!- \!1}\right) \sum_{t=1}^T\left \lVert\bm v_i[t] -\bm v_i^\star[t]\right \rVert_2. \label{eq:dregretintermsofv}
	\end{align}
	Next, we derive a bound on $\sum_{t=1}^T\left \lVert\bm v_i[t] -\bm v_i^\star[t]\right \rVert_2$. To this end, the first step is to prove the following result:
	\begin{equation} \label{eq:boundingdifference}
		\left \lVert \bm v_i[t+1] - \bm v_i^\star [t]\right \rVert_2 \leq \rho \left \lVert \bm v_i[t] - \bm v_i^\star [t]\right \rVert_2,
	\end{equation}
	where $\rho= 1-\alpha \beta$.
	To this end, squaring the l.h.s. of \eqref {eq:boundingdifference} and by definition of $\bm v_i[t+1]$, we have
	\begin{subequations}
		\begin{align*}
		&\left \lVert \bm v_i[t+1] - \bm v_i^\star [t]\right \rVert_2^2
		 = \left \lVert \begin{bmatrix}
		\bm a_{-i}[t+1]- \bm a_{-i}^\star [t]\\
		b_{ii}[t+1]- b_{ii}^\star [t]
		\end{bmatrix}\right \rVert_2^2\\
		& = \left \lVert \bm a_{-i}[t+1]- \bm a_{-i}^\star [t]\right \rVert_2^2 + \left \lVert b_{ii}[t+1]- b_{ii}^\star [t]\right \rVert_2^2\\
		& = \big \lVert \textbf{prox}_{ \lambda \lVert \cdot \rVert_1}^\alpha (\bm a_{-i}[t]- \alpha \nabla_{\bm a_{-i}}f_t^i(\bm a_{-i}[t])) \\
		& \quad  - \textbf{prox}_{ \lambda \lVert \cdot \rVert_1}^\alpha (\bm a_{-i}^\star[t]- \alpha \nabla_{\bm a_{-i}}f_t^i(\bm a_{-i}^\star [t])) \big \rVert_2^2 \nonumber  \\
		& \quad + \left ( b_{ii}[t]- \alpha \nabla_{b_{ii}}f_t^i(b_{ii}[t]) -( b_{ii}^\star [t]-\alpha \nabla_{b_{ii}}f_t^i(b_{ii}^\star [t])) \right)^2\\
		& \leq \left \lVert  (\bm a_{-i}[t]\! -\! \alpha \nabla_{\bm a_{-i}}f_t^i(\bm a_{-i}[t])) \! - \!  (\bm a_{-i}^\star[t]\! -\!  \alpha \nabla_{\bm a_{-i}}f_t^i(\bm a_{-i}^\star [t])) \right \rVert_2^2 \nonumber  \\
		& \quad + \left ( b_{ii}[t]- \alpha \nabla_{b_{ii}}f_t^i(b_{ii}[t]) -( b_{ii}^\star [t]-\alpha \nabla_{b_{ii}}f_t^i(b_{ii}^\star [t])) \right)^2\\
		& = \left \lVert  (\bm v_{i}[t]- \alpha \nabla_{\bm v_{i}}f_t^i(\bm v_{i}[t])) -  (\bm v_{i}^\star[t]- \alpha \nabla_{\bm v_{i}}f_t^i(\bm v_{i}^\star [t])) \right \rVert_2^2\\
		&= \left \lVert  \bm v_{i}[t]- \bm v_{i}^\star[t] \right \rVert_2^2 +\alpha ^2 \left \lVert \nabla_{\bm v_{i}}f_t^i(\bm v_{i}[t]) -   \nabla_{\bm v_{i}}f_t^i(\bm v_{i}^\star [t]) \right \rVert_2^2  \nonumber \\
		& \quad -2\alpha (\bm v_{i}[t]- \bm v_{i}^\star[t] )^\top (\nabla_{\bm v_{i}}f_t^i(\bm v_{i}[t]) -   \nabla_{\bm v_{i}}f_t^i(\bm v_{i}^\star [t])) \\
		&\leq \left \lVert  \bm v_{i}[t]- \bm v_{i}^\star[t] \right \rVert_2^2 +\alpha ^2 \left \lVert \nabla_{\bm v_{i}}f_t^i(\bm v_{i}[t]) -   \nabla_{\bm v_{i}}f_t^i(\bm v_{i}^\star [t]) \right \rVert_2^2 \\
		&\quad -2\alpha \Big(\frac{\beta L_f}{L_f+\beta}\left \lVert  \bm v_{i}[t]- \bm v_{i}^\star[t] \right \rVert_2^2 \nonumber \\
		& \quad + \frac{1}{L_f+\beta} \left \lVert \nabla_{\bm v_{i}}f_t^i(\bm v_{i}[t]) -   \nabla_{\bm v_{i}}f_t^i(\bm v_{i}^\star [t]) \right \rVert_2^2 \Big),
		\end{align*}
	\end{subequations}
where the above inequality is implied by assumptions A\ref{as:storngconvexity} and A\ref{as:lipschitzsmoothness}. Given that $\alpha \in (0,1/L_f]$, using assumption A\ref {as:storngconvexity}, and by further simplifications, we have:
	\begin{subequations}
	\begin{align*}
	&\left \lVert \bm v_i[t+1] - \bm v_i^\star [t]\right \rVert_2^2\\
	&\leq \left (1\! -\! \frac{2\alpha \beta L_f}{L_f\! +\! \beta}\right) \left \lVert  \bm v_{i}[t]\! -\!  \bm v_{i}^\star[t] \right \rVert_2^2 \! +\! \left (\alpha ^2 \! -\! \frac{2\alpha}{L_f\! +\! \beta}\right )\cdot \\
		&\quad \big \lVert \nabla_{\bm v_{i}}f_t^i(\bm v_{i}[t]) - \nabla_{\bm v_{i}}f_t^i(\bm v_{i}^\star [t]) \big \rVert_2^2 \\
		&=\left (1-\frac{2\alpha \beta L_f}{L_f+\beta}\right) \left \lVert  \bm v_{i}[t]- \bm v_{i}^\star[t] \right \rVert_2^2 -\left (\frac{2\alpha}{L_f+\beta}-\alpha ^2 \right ) \cdot\\
		& \quad \big \lVert \nabla_{\bm v_{i}}f_t^i(\bm v_{i}[t])- \nabla_{\bm v_{i}}f_t^i(\bm v_{i}^\star [t]) \big \rVert_2^2 \\
		&\leq \left (1\!-\!\frac{2\alpha \beta L_f}{L_f+\beta}\right) \left \lVert  \bm v_{i}[t]- \bm v_{i}^\star[t] \right \rVert_2^2 \!-\!\beta^2\left (\frac{2\alpha}{L_f+\beta}\!-\!\alpha ^2 \right ) \cdot \\
		& \quad \left \lVert  \bm v_{i}[t]- \bm v_{i}^\star[t] \right \rVert_2^2\\
		&= \left (1-\frac{2\alpha \beta L_f}{L_f+\beta}-\frac{2\alpha\beta^2}{L_f+\beta}+\alpha ^2\beta^2\right) \left \lVert  \bm v_{i}[t]- \bm v_{i}^\star[t] \right \rVert_2^2 \\
		&= \left (1-2\alpha \beta +\alpha ^2\beta^2\right) \left \lVert  \bm v_{i}[t]- \bm v_{i}^\star[t] \right \rVert_2^2 \\
		&= \rho^2 \left \lVert  \bm v_{i}[t]- \bm v_{i}^\star[t] \right \rVert_2^2,
		\end{align*}
	\end{subequations}
	where $\rho \define 1-\alpha \beta $. Taking square root on both sides of the above inequality yields \eqref {eq:boundingdifference}. 
	\par 
	Next, we show that
	\begin{equation} \label {eq:cumulaivegap}
	\sum_{t=1}^T \left \lVert \bm v_{i}[t]- \bm v_{i}^\star[t] \right \rVert_2 \leq \frac{1}{1- \rho} \left [\left \lVert \bm v_{i}[1]- \bm v_{i}^\star[1] \right \rVert_2 + W_i[T] \right].
	\end{equation}
	To prove the above expression, consider the cumulative gap 
	\begin{subequations} 
		\begin{align*}
			&\sum_{t=2}^T \left \lVert \bm v_{i}[t] \!-\! \bm v_{i}^\star[t] \right \rVert_2 \!= \!\sum_{t=2}^T \left \lVert \bm v_{i}[t] \! - \!\bm v_{i}^\star[t\!-\!1]\!+\!\bm v_{i}^\star[t\!-\!1]\!-\!\bm v_{i}^\star[t] \right \rVert_2 \\
			& \quad \quad \leq \sum_{t=2}^T \left [\left \lVert \bm v_{i}[t]- \bm v_{i}^\star[t-1]\right \rVert_2+\left \lVert \bm v_{i}^\star[t] - \bm v_{i}^\star[t-1] \right \rVert_2 \right  ]\\
			& \quad \quad = \sum_{t=1}^{T-1} \left \lVert \bm v_{i}[t+1]- \bm v_{i}^\star[t]\right \rVert_2+W_i[T] \\
			& \quad \quad \leq \sum_{t=1}^{T-1} \rho\left \lVert \bm v_{i}[t]- \bm v_{i}^\star[t]\right \rVert_2+W_i[T] \\
			& \quad \quad \leq \sum_{t=1}^{T} \rho\left \lVert \bm v_{i}[t]- \bm v_{i}^\star[t]\right \rVert_2+W_i[T].
	\end{align*}
	\end{subequations}
	Adding $ \lVert \bm v_{i}[1]- \bm v_{i}^ \star[1] \rVert_2$ on both sides of the above inequality results in:
	\begin{align}
		&\sum_{t=1}^T \left \lVert \bm v_{i}[t]- \bm v_{i}^\star[t] \right \rVert_2 \leq  \nonumber \\
		&\sum_{t=1}^{T} \rho\left \lVert \bm v_{i}[t]- \bm v_{i}^\star[t]\right \rVert_2 
		+\left \lVert \bm v_{i}[1]- \bm v_{i}^\star[1] \right \rVert_2+W_i[T].
	\end{align}
	By rearranging terms in the above inequality, we obtain the result in  \eqref {eq:cumulaivegap}. Thus, we can substitute $\sum_{t=1}^T  \lVert \bm v_{i}[t]- \bm v_{i}^\star[t] \rVert_2$ with its bound from \eqref {eq:cumulaivegap} into \eqref{eq:dregretintermsofv} and note that $v_i[1]=0_{N \times 1}$ in \textbf{Algorithm \ref{alg:dynamicSEM}}. This completes the proof.
	\end{proof}
\textbf{Remarks.} The bound on the total dynamic regret is given by: 
\begin{equation}
R_d[T]= D_h \sum_{i=1}^N\left (\left \lVert \bm v_{i}^\star[1] \right \rVert_2 + W_i[T] \right),
\end{equation}
where $D_h$ is defined in \eqref {eq:regretconstant}. Notice that this means that the bound on the dynamic regret is a function of the parameters of the data and the parameters of the algorithm. Moreover, for a sublinear path length, the dynamic regret of the proposed algorithm is sublinear.  
\end{myitemize}
\section{Numerical Results} \label {sec:simulations}
In this section, the performance of the algorithm is analyzed by presenting numerical tests. The experimental results are based on synthetic data. 
\par
To generate the matrices $\bm A^t$, a binary adjacency matrix $\bm A_{\text{binary}}$ is generated according to an Erd\H{o}s-R\'enyi model with edge probability $p_e$. No self-loops are considered, i.e., the diagonal entries of $\bm A_{\text{binary}}$ are zero.
Two models are considered in the simulations: a) smooth-transition model  and b) non-smooth transition model. In the smooth-transition model, the nonzero elements of $\bm A^t$ follow the pattern of the 1's in $\bm A_{\text{binary}}$. For $t=1$, the nonzero elements of $\bm A^t$ take one of the following four functions via random selection: i) $a1(t)= 0.5+0.5\sin(0.1t)$, ii) $a2(t)= 0.5+0.5\cos(0.1t)$, iii) $a3(t)= \exp (-0.01t)$, and iv) $a4(t)=0$. For $t>1$, the elements of $\bm A^t$ evolve according to the function selected initially by evaluating the functions for $t$. In the non-smooth transition model, a static model for $\bm A^t$ is considered for $t<T/2$. The nonzero elements of $A^t$ are drawn one  time from a standard Gaussian distribution. At $t=T/2$, the model changes from one to another.  In both models, the matrices $\bm B^t$ are assumed to be constant, i.e., $\bm B^t= \mathrm{diag}(\bm b)$, where $\bm b$ is fixed and chosen one time randomly from a standard Gaussian distribution.
This assumption means that the coefficients of external influences are constant over time, which is natural since $\bm X$ is constant in the model (cf. \eqref{eq:SEMmodelmat}). At each time $t$, for each contagion $c$, $\bm e_c^t$ is drawn from $ \mathcal N(\bm 0_{N\times 1}, \sigma \bm I_{N \times N})$. At time $t$, $\bm Y^t$ is generated using \eqref {eq:SEMmodelmat}.  
\par Fig. \ref {fig:mse} presents the mean-square error (MSE) given by
$1/N^2\sum_{i=1}^N \lVert  \bm v_i[t]- \bm v_i^{\text{true}}[t] \rVert_2^2$ versus time, and Fig. \ref{fig:dregret} shows the dynamic regret $R_d[T]$ for both models. Since the optimal solution is time-varying, the algorithm is required to track the changes in the optimal solutions. Observe from Fig. \ref {fig:mse} that the MSE has a decreasing trend, meaning that the proposed algorithm is able to track the changes in the time-varying topologies. Fig. \ref{fig:dregret} shows that the dynamic regret of the non-smooth (single breaking point) transition model is lower than that of the smooth-transition model, since the model is always changing in the smooth-transition model. 
\begin{figure}
	\begin{subfigure}{0.5\textwidth}
		\includegraphics[width=\textwidth]{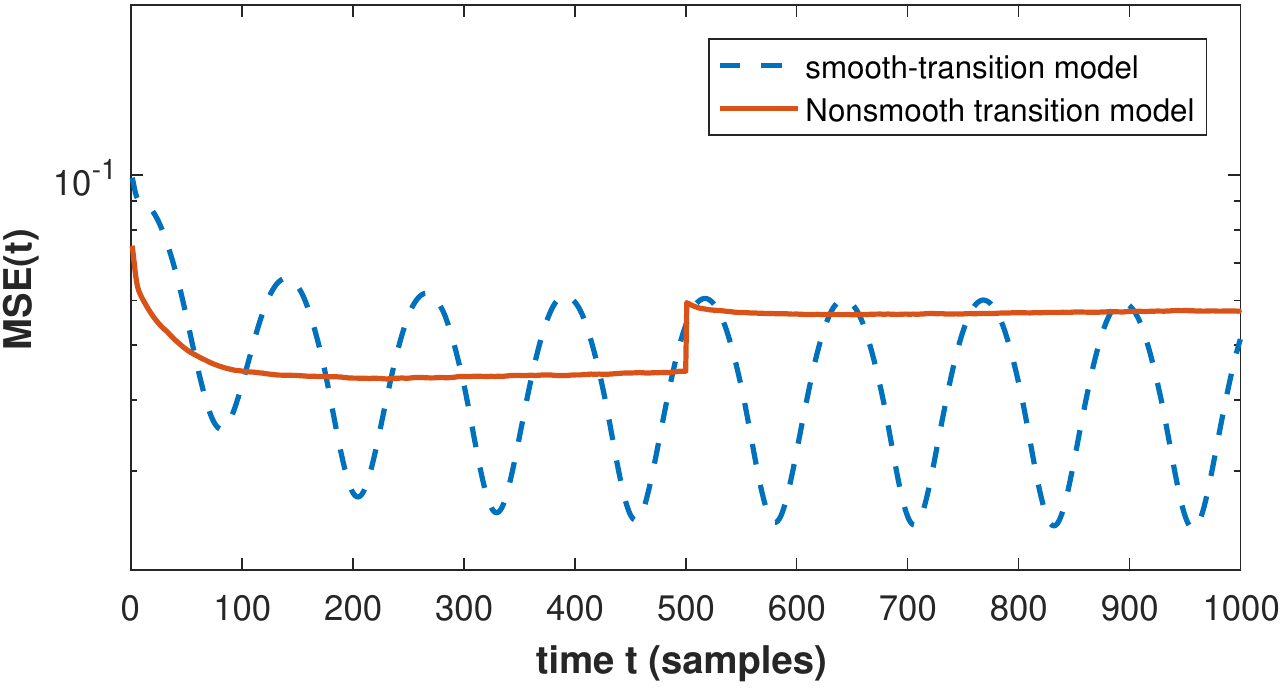}
		\caption{MSE vs. time $t$}
		\label{fig:mse}
		\end{subfigure}
	
		\begin{subfigure}{0.5\textwidth}
				\includegraphics[width=\textwidth]{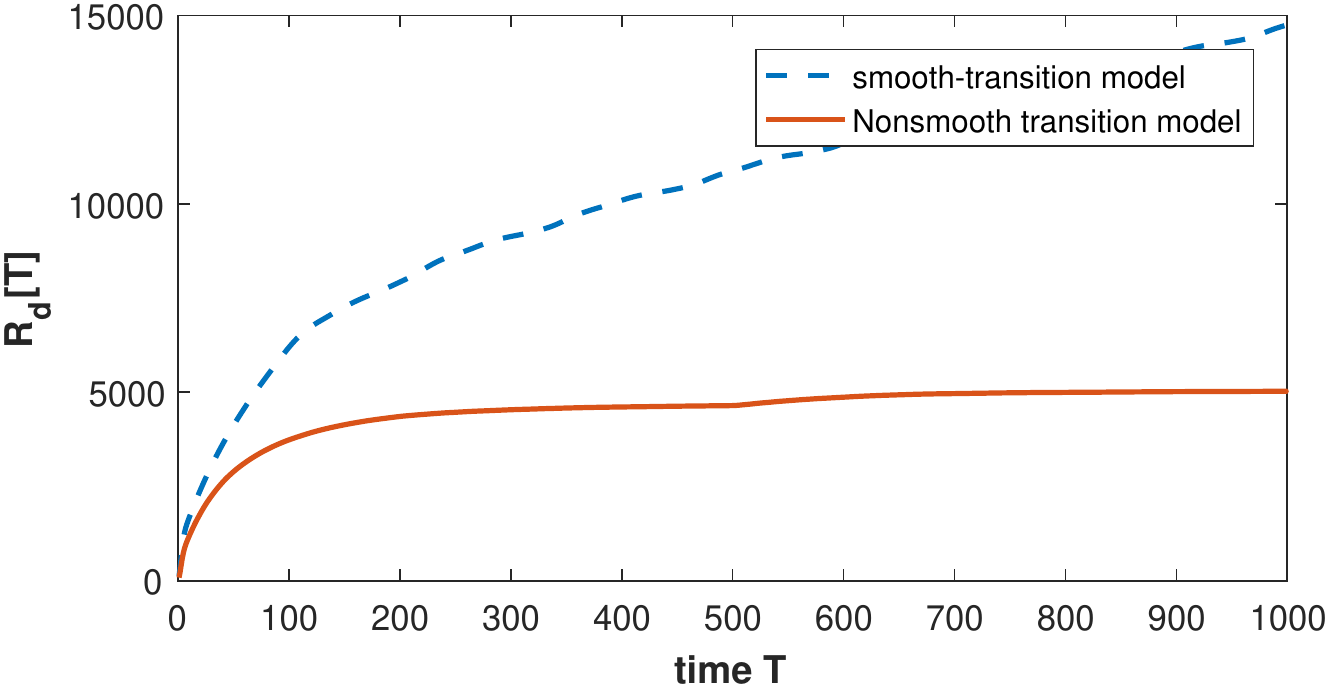}
				\caption{Dynamic regret vs time ($T$)}
				\label{fig:dregret}
	\end{subfigure}
	
		\caption{MSE versus time. Parameters: $N=10, p_e=0.15, C=5, \sigma= 0.1, \lambda=15, \gamma = 0.9, \alpha=1/L_f.$ }
\end{figure}

\section{Conclusion} \label{sec:conclusion}
An online algorithm for tracking dynamic SEM-based topologies is presented in this paper. A bound was derived on the dynamic regret (a much better metric than static regret for time-varying scenarios) of the proposed algorithm. This bound is a function of the numeric properties of the data that are easy to obatin, parameters of the algorithm, and the path length, which is a metric of how much the model parameters vary in a time interval. When the path length is sublinear in time, the dynamic regret of the algorithm becomes sublinear, meaning that the online algorithm enjoys a performance comparable to the optimal offline estimator. The tracking capabilities of the algorithm have been numerically validated for a time-varying scenario under two different assumptions on the model variations, namely a smooth-transition and an abrupt-transition model.

	\if\editmode1
	\onecolumn
	\printbibliography
	\else
	\bibliography{\bibfilenames}
	\fi
\end{document}